\providecommand{\U}[1]{\protect\rule{.1in}{.1in}}
\newtheorem{theorem}{Theorem} [section]
\newtheorem{corollary}[theorem]{Corollary}
\newtheorem{definition}[theorem]{Definition}
\newtheorem{lemma}[theorem]{Lemma}
\newtheorem{problem}[theorem]{Problem}
\newenvironment{proof}[1][Proof]{\noindent\textbf{#1.} }{\ \rule{0.5em}{0.5em}}
\begin{document}

\author{Vadim E. Levit\\Ariel University Center of Samaria, ISRAEL\\E-mail: levitv@ariel.ac.il
\and Eugen Mandrescu\\Holon Institute of Technology, ISRAEL\\E-mail: eugen\_m@hit.ac.il }
\date{}
\title{Very Well-Covered Graphs of Girth at least Four and Local Maximum Stable Set
Greedoids\thanks{A preliminary version of this paper has been presented at the
38$^{th}$ Southeastern International Conference on Combinatorics, Graph
Theory, and Computing, March 5-9, 2007, Boca-Raton, Florida, USA.}}
\maketitle

\begin{abstract}
A \textit{maximum stable set }in a graph $G$ is a stable set of maximum
cardinality. $S$ is a \textit{local maximum stable set} of $G$, and we write
$S\in\Psi(G)$, if $S$ is a maximum stable set of the subgraph induced by
$S\cup N(S)$, where $N(S)$ is the neighborhood of $S$.

Nemhauser and Trotter Jr. \cite{NemhTro}, proved that any $S\in\Psi(G)$ is a
subset of a maximum stable set of $G$. In \cite{LevMan2} we have shown that
the family $\Psi(T)$ of a forest $T$ forms a greedoid on its vertex set. The
cases where $G$ is bipartite, triangle-free, well-covered, while $\Psi(G)$ is
a greedoid, were analyzed in \cite{LevMan45}, \cite{LevMan07}, \cite{LevMan08}%
, respectively.

In this paper we demonstrate that if $G$ is a very well-covered graph of girth
$\geq4$, then the family $\Psi(G)$ is a greedoid if and only if $G$ has a
unique perfect matching.

\textbf{Keywords:} very well-covered graph, local maximum stable set,
greedoid, triangle-free graph, K\"{o}nig-Egerv\'{a}ry graph.

\end{abstract}

\section{Introduction}

Throughout this paper $G=(V,E)$ is a simple (i.e., a finite, undirected,
loopless and without multiple edges) graph with vertex set $V=V(G)$ and edge
set $E=E(G)$. If $X\subset V$, then $G[X]$ is the subgraph of $G$ spanned by
$X$. $K_{n},C_{n}$ denote respectively, the complete graph on $n\geq1$
vertices and the chordless cycle on $n\geq3$ vertices. If $A,B$ $\subset V$
and $A\cap B=\emptyset$, then $(A,B)$ stands for the set $\{e=ab:a\in A,b\in
B,e\in E\}$.

The \textit{neighborhood} of a vertex $v\in V$ is the set $N(v)=\{u:u\in V$
\ \textit{and} $vu\in E\}$. For $A\subset V$, we denote $N(A)=\{v\in
V-A:N(v)\cap A\neq\emptyset\}$ and $N[A]=A\cup N(A)$.

A \textit{stable} set in $G$ is a set of pairwise non-adjacent vertices. A
stable set of maximum size will be referred to as a \textit{maximum stable
set} of $G$, and the \textit{stability number }of $G$, denoted by $\alpha(G)$,
is the cardinality of a maximum stable set in $G$. Let $\Omega(G)$ stand for
the set of all maximum stable sets of $G$.

A \textit{matching} in a graph $G=(V,E)$ is a set of edges $M\subseteq E$ such
that no two edges of $M$ share a common vertex. A \textit{maximum matching} is
a matching of maximum cardinality. By $\mu(G)$ is denoted the cardinality of a
maximum matching. A matching is \textit{perfect} if it saturates all the
vertices of the graph.

Let us recall that $G$ is a \textit{K\"{o}nig-Egerv\'{a}ry graph} provided
$\alpha(G)+\mu(G)=\left\vert V(G)\right\vert $ \cite{Dem}, \cite{Ster}. As a
well-known example, any bipartite graph is a K\"{o}nig-Egerv\'{a}ry graph
\cite{Eger}, \cite{Koen}.

\begin{theorem}
\cite{LevMan2003}\label{th4} If $G$ is a \textit{K\"{o}nig-Egerv\'{a}ry graph,
then every maximum matching is contained in }$\left(  S,V(G)-S\right)  $, for
each $S\in\Omega\left(  G\right)  $.
\end{theorem}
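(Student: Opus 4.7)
The plan is a short counting argument that plugs the K\"{o}nig-Egerv\'{a}ry identity $\alpha(G)+\mu(G)=|V(G)|$ into a simple case split on how the edges of a maximum matching meet $S$. Fix $S\in\Omega(G)$ and a maximum matching $M$ of $G$. First I would observe that, because $S$ is stable, no edge of $M$ can have both endpoints in $S$; consequently every edge of $M$ either joins $S$ with $V(G)-S$ or lies entirely inside $V(G)-S$.

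Next, I would partition $M=M_{1}\cup M_{2}$ along this dichotomy and count the vertices of $V(G)-S$ that are saturated by $M$. The edges of $M_{1}$ contribute $|M_{1}|$ such vertices, while those of $M_{2}$ contribute $2|M_{2}|$, and all of these endpoints are distinct since $M$ is a matching. Hence
\[
|M_{1}|+2|M_{2}|\le |V(G)-S|=|V(G)|-\alpha(G),
\]
and by the K\"{o}nig-Egerv\'{a}ry hypothesis the right-hand side equals $\mu(G)=|M_{1}|+|M_{2}|$. Subtracting $|M_{1}|+|M_{2}|$ from both sides forces $|M_{2}|\le 0$, so $M_{2}=\emptyset$, meaning every edge of $M$ lies in $(S,V(G)-S)$, as claimed.

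There is no real obstacle here, as the argument is essentially bookkeeping. The only point that deserves emphasis is that the equality $\alpha(G)+\mu(G)=|V(G)|$, rather than the generic inequality $\alpha(G)+\mu(G)\le |V(G)|$, is what collapses the estimate; without the K\"{o}nig-Egerv\'{a}ry hypothesis, edges of $M$ may perfectly well lie inside $V(G)-S$, as small non-K\"{o}nig-Egerv\'{a}ry examples (e.g.\ an odd cycle) show. So the whole content of the lemma is really the trick of turning one equation, the K\"{o}nig-Egerv\'{a}ry identity, into a structural statement about where a maximum matching must live relative to a fixed maximum stable set.
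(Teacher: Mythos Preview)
Your argument is correct: the partition $M=M_{1}\cup M_{2}$ together with the count $|M_{1}|+2|M_{2}|\le |V(G)|-\alpha(G)=\mu(G)=|M_{1}|+|M_{2}|$ cleanly forces $M_{2}=\emptyset$. There is nothing to compare against, however, since the paper does not prove this theorem at all; it is quoted from \cite{LevMan2003} and used as a black box (in the proof of Theorem~\ref{th7}), so your counting proof is a self-contained addition rather than an alternative to anything in the present paper.
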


A matching $M=\{a_{i}b_{i}:a_{i},b_{i}\in V(G),1\leq i\leq k\}$ of graph $G$
is called \textit{a uniquely restricted matching} if $M$ is the unique perfect
matching of $G[\{a_{i},b_{i}:1\leq i\leq k\}]$ \cite{GolHiLew}. For instance,
all the maximum matchings of the graph $G$ in Figure \ref{fig111} are uniquely
restricted, while the graph $H$ from the same figure has both uniquely
restricted maximum matchings (e.g., $\{uv,xw\}$) and non-uniquely restricted
maximum matchings (e.g., $\{xy,tv\}$).

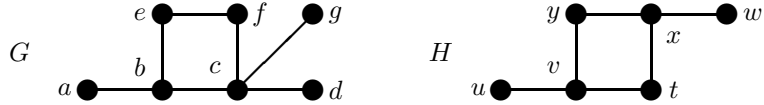
\begin{figure}[h]
\setlength{\unitlength}{1.0cm} \begin{picture}(5,1.5)\thicklines
\multiput(3,0)(1,0){4}{\circle*{0.29}}
\multiput(4,1)(1,0){3}{\circle*{0.29}}
\put(3,0){\line(1,0){3}}
\put(5,0){\line(1,1){1}}
\put(4,1){\line(1,0){1}}
\multiput(4,0)(1,0){2}{\line(0,1){1}}
\put(2.7,0){\makebox(0,0){$a$}}
\put(3.7,0.3){\makebox(0,0){$b$}}
\put(4.7,0.3){\makebox(0,0){$c$}}
\put(6.3,0){\makebox(0,0){$d$}}
\put(3.7,1){\makebox(0,0){$e$}}
\put(5.3,1){\makebox(0,0){$f$}}
\put(6.3,1){\makebox(0,0){$g$}}
\put(2.1,0.5){\makebox(0,0){$G$}}
\multiput(8.5,0)(1,0){3}{\circle*{0.29}}
\multiput(9.5,1)(1,0){3}{\circle*{0.29}}
\put(8.5,0){\line(1,0){2}}
\put(9.5,1){\line(1,0){2}}
\multiput(9.5,0)(1,0){2}{\line(0,1){1}}
\put(8.2,0){\makebox(0,0){$u$}}
\put(9.2,0.3){\makebox(0,0){$v$}}
\put(10.8,0){\makebox(0,0){$t$}}
\put(11.85,1){\makebox(0,0){$w$}}
\put(9.2,1){\makebox(0,0){$y$}}
\put(10.8,0.7){\makebox(0,0){$x$}}
\put(7.7,0.5){\makebox(0,0){$H$}}
\end{picture}\caption{The unique cycle of $H$ is alternating with respect to
the matching $\{yv,tx\}$.}%
\label{fig111}%
\end{figure}

Recall that $G$ is \textit{well-covered} if all its maximal stable sets have
the same cardinality \cite{Plummer}, and $G$ is \textit{very well-covered} if,
in addition, it has no isolated vertices and $\left\vert V(G)\right\vert
=2\alpha(G)$ \cite{Favaron}. 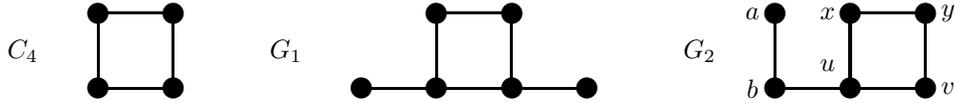
\begin{figure}[h]
\setlength{\unitlength}{1cm}\begin{picture}(5,1.4)\thicklines
\multiput(2,0)(1,0){2}{\circle*{0.29}}
\multiput(2,1)(1,0){2}{\circle*{0.29}}
\put(2,0){\line(1,0){1}}
\put(2,1){\line(1,0){1}}
\put(2,0){\line(0,1){1}}
\put(3,0){\line(0,1){1}}
\put(1,0.5){\makebox(0,0){$C_4$}}
\multiput(5.5,0)(1,0){4}{\circle*{0.29}}
\multiput(6.5,1)(1,0){2}{\circle*{0.29}}
\multiput(6.5,0)(1,0){2}{\line(0,1){1}}
\put(5.5,0){\line(1,0){3}}
\put(6.5,1){\line(1,0){1}}
\put(4.5,0.5){\makebox(0,0){$G_{1}$}}
\multiput(11,0)(1,0){3}{\circle*{0.29}}
\multiput(11,1)(1,0){3}{\circle*{0.29}}
\put(11,0){\line(1,0){2}}
\put(11,0){\line(0,1){1}}
\put(12,0){\line(0,1){1}}
\put(13,0){\line(0,1){1}}
\put(12,1){\line(1,0){1}}
\put(10.7,1){\makebox(0,0){$a$}}
\put(10.7,0){\makebox(0,0){$b$}}
\put(11.7,1){\makebox(0,0){$x$}}
\put(11.7,0.3){\makebox(0,0){$u$}}
\put(13.3,1){\makebox(0,0){$y$}}
\put(13.3,0){\makebox(0,0){$v$}}
\put(10,0.5){\makebox(0,0){$G_{2}$}}
\end{picture}\caption{Only $C_{4}$ and $G_{1}$ are very well-covered graphs.}%
\label{fig811}%
\end{figure}

It is easy to prove that every graph having a perfect matching consisting of
pendant edges is very well-covered. The converse is not generally true; e.g.,
the graphs $C_{4}$ and $G_{1}$ depicted in Figure \ref{fig811}. Moreover,
there are well-covered graphs without perfect matchings; e.g., $K_{3}$.
Nevertheless, having a perfect matching\ is a necessary condition for very well-coveredness.

\begin{theorem}
\label{th11}\cite{Favaron}\ For a graph $G$ without isolated vertices the
following are equivalent:

\emph{(i)} $G$ is very well-covered;

\emph{(ii)} there exists a perfect matching in $G$ that satisfies property
\emph{P};

\emph{(iii)} there exists at least one perfect matching in $G$ and every
perfect matching in $G$ satisfies property \emph{P}.
\end{theorem}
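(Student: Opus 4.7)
The plan is to prove the cyclic implications $(i) \Rightarrow (iii) \Rightarrow (ii) \Rightarrow (i)$, since $(iii) \Rightarrow (ii)$ is immediate. Both substantive directions lean on the numerical identities implicit in very well-coveredness: namely $|V(G)| = 2\alpha(G)$ together with the well-covered König–Egerváry identity $\alpha(G)+\mu(G) = |V(G)|$, which forces $\mu(G)=\alpha(G)=|V(G)|/2$.

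For $(i) \Rightarrow (iii)$, I would first observe that $\mu(G) = |V(G)|/2$ makes every maximum matching of $G$ a perfect matching, so at least one perfect matching exists. Theorem~\ref{th4} then guarantees that each edge of any such matching $M$ lies in the cut $(S, V(G)-S)$ for every $S \in \Omega(G)$. To show that property P holds for $M$, I would argue by contradiction: suppose some clause of P fails among edges $a_ib_i,a_jb_j \in M$. Using the crossing property and very well-coveredness, the local configuration around these two edges should let one construct either a stable set of size exceeding $\alpha(G)$ or a maximal stable set of size strictly less than $\alpha(G)$, both impossible.

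For $(ii) \Rightarrow (i)$, I would take a perfect matching $M = \{a_ib_i : 1 \le i \le n\}$ satisfying property P, note that $|V(G)| = 2n$ and that $\alpha(G) \le n$ (no stable set contains both endpoints of a matching edge), and reduce the claim to showing that every maximal stable set has size exactly $n$. The crux is a transversal argument: if a maximal stable set $S$ contains neither $a_i$ nor $b_i$, then maximality forces each of $a_i,b_i$ to have a neighbor in $S$; property P should then either allow a swap along matching edges enlarging $S$, or produce a direct contradiction with another clause of P. Consequently $S$ meets every pair $\{a_i,b_i\}$ in exactly one vertex, so $|S|=n$.

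The principal obstacle is the $(ii) \Rightarrow (i)$ direction: property P is precisely engineered to exclude the local configurations that prevent very well-coveredness (such as those realized by the graph $G_2$ in Figure~\ref{fig811}), so the exchange argument above requires a careful case analysis on how the matching edges incident to $a_i$ and $b_i$ interact with a maximal but non-maximum stable set. Each clause of P has to be invoked in the right configuration, and it is this matching-based combinatorial bookkeeping, rather than any hard global argument, that carries the proof.
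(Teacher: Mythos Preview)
The paper does not prove Theorem~\ref{th11}; it is quoted from Favaron~\cite{Favaron} and used as a black box throughout (the text immediately following the statement merely unpacks the definition of property~P and gives an example). There is therefore no proof in the present paper against which to compare your proposal.

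On the merits of your outline itself, watch for circularity in the $(i)\Rightarrow(iii)$ direction. You invoke the identity $\alpha(G)+\mu(G)=|V(G)|$ for very well-covered graphs, but that is precisely Theorem~\ref{th5}, a later result of Levit and Mandrescu~\cite{LevMan98} whose standard proof relies on Theorem~\ref{th11}. From the bare definition $|V(G)|=2\alpha(G)$ one gets only $\mu(G)\le |V(G)|-\alpha(G)=\alpha(G)$ (since $\alpha+\mu\le |V|$ holds in every graph), not the reverse inequality needed to guarantee a perfect matching. Favaron's original argument constructs the perfect matching directly from the well-covered hypothesis rather than importing the K\"{o}nig--Egerv\'{a}ry property, so if you want a self-contained proof you must supply that step. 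Your $(ii)\Rightarrow(i)$ sketch is closer in spirit to Favaron's approach, and the transversal/swap idea is the right one, but the case analysis you allude to genuinely needs both clauses of property~P working in tandem and is where the real work lies.
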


\textit{A} matching $M$ in a graph $G$ satisfies \textit{Property} \emph{P} if%
\[
\text{\textquotedblleft}N(x)\cap N(y)=\emptyset\text{, and each }v\in
N(x)-\{y\}\text{ is adjacent to all vertices of }N(y)-\left\{  x\right\}
\text{\textquotedblright}%
\]
hold for every edge $xy\in M$.

For example, the perfect matching $M=\{ab,xy,uv\}$ of the graph $G_{2}$ from
Figure \ref{fig811} does not satisfies Property \emph{P}, since $uv\in M,b\in
N(u),y\in N(v)$, but $by\notin E(G_{2})$. Hence, $G_{2}$ is not a very
well-covered graph. Moreover, $G_{2}$ is not well-covered, because no maximum
stable set of $G_{2}$ includes the stable set $\{b,v\}$. However, $G_{2}$ is a
K\"{o}nig-Egerv\'{a}ry graph. Notice that $K_{4}$ is well-covered, has perfect
matchings, but is neither a K\"{o}nig-Egerv\'{a}ry graph, nor a very
well-covered graph.

\begin{theorem}
\label{th5}\cite{LevMan98} A graph is very well-covered if and only if it is a
well-covered K\"{o}nig-Egerv\'{a}ry graph.
\end{theorem}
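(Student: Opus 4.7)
I would prove the two implications separately.

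\emph{Forward direction.} If $G$ is very well-covered, then well-coveredness holds by definition, and Theorem~\ref{th11} supplies a perfect matching; hence $\mu(G) = |V(G)|/2 = \alpha(G)$, giving $\alpha(G) + \mu(G) = |V(G)|$.

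\emph{Reverse direction.} Assume $G$ is a well-covered K\"{o}nig-Egerv\'{a}ry graph without isolated vertices. Since $\alpha(G) + \mu(G) = |V(G)|$, the equation $|V(G)| = 2\alpha(G)$ is equivalent to $\alpha(G) = \mu(G)$, i.e., to $G$ possessing a perfect matching. Fix $S \in \Omega(G)$; by Theorem~\ref{th4}, every maximum matching $M$ sits inside $(S, V(G) \setminus S)$ with $|M| = |V(G) \setminus S|$, so $M$ saturates $V(G) \setminus S$ exactly, and the remaining task is to show $M$ also saturates $S$. Suppose not, so some $s_0 \in S$ is $M$-unsaturated; choose a neighbor $v_0 \in V(G) \setminus S$ of $s_0$ (it exists since $s_0$ is non-isolated and $S$ is stable) and let $s_1 \neq s_0$ be the $M$-partner of $v_0$. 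Put $r := |N(v_0) \cap S| \geq 2$ and form the stable set $S_0 := (S \setminus N(v_0)) \cup \{v_0\}$, of size $\alpha(G) - r + 1$. By well-coveredness, any maximal extension $T \supseteq S_0$ satisfies $|T| = \alpha(G)$, hence $|T \setminus S_0| = r - 1$. Each added $w \in T \setminus S_0$ lies in $V(G) \setminus S \setminus \{v_0\}$ (since $T \cap N[v_0] = \{v_0\}$), is $M$-saturated, and---using $S \setminus N(v_0) \subseteq S_0 \subseteq T$ together with the stability of $T$---its $M$-partner lies in $N(v_0) \cap S$. But this partner cannot be $s_0$ (unsaturated) or $s_1$ (paired with $v_0$), so it belongs to a set of only $r - 2$ elements; injectivity of the matching on $V(G) \setminus S$ then caps $|T \setminus S_0|$ at $r - 2$, contradicting $|T \setminus S_0| = r - 1$.

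\emph{Main obstacle.} The work lies in the reverse direction, where the key move is the counting clash engineered by $S_0$: well-coveredness demands $r - 1$ new vertices, while injectivity of the matching, with $s_0$ and $s_1$ forbidden as partners, admits at most $r - 2$. The subtle verification is that all admissible new vertices must come from $V(G) \setminus S$ with matching partners confined to $N(v_0) \cap S$, which follows from the specific design of $S_0$ and the stability of $T$.
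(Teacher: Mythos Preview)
The paper does not prove Theorem~\ref{th5}; it is quoted from \cite{LevMan98} and used as a black box. So there is no in-paper argument to compare against, and the question reduces to whether your proposed proof stands on its own.

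It does. The forward direction is immediate from Theorem~\ref{th11} as you say. For the reverse direction your counting argument is sound: once Theorem~\ref{th4} forces a maximum matching $M$ to lie in $(S,V(G)\setminus S)$ with $|M|=|V(G)\setminus S|$, every vertex outside $S$ is saturated, and the swap set $S_0=(S\setminus N(v_0))\cup\{v_0\}$ is stable of size $\alpha(G)-r+1$. Any maximal extension $T\supseteq S_0$ must pick up exactly $r-1$ further vertices; each such vertex lies in $V(G)\setminus S$, has its $M$-partner in $S$, and that partner is forced (by $S\setminus N(v_0)\subseteq T$) into $N(v_0)\cap S$, yet cannot equal $s_0$ or $s_1$. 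Injectivity of $M$ then gives the contradiction $r-1\leq r-2$. All of these steps check out.

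One remark: you correctly insert the hypothesis ``without isolated vertices'' in the reverse direction. This is necessary and is implicit in the paper's standing conventions (the definition of very well-covered excludes isolated vertices, and $K_1$ is a well-covered K\"{o}nig--Egerv\'{a}ry graph that is not very well-covered), so it is worth flagging explicitly, as you do.
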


A set $A\subseteq V(G)$ is a \textit{local maximum stable set} of $G$ if
$A\in\Omega(G[N[A]])$ \cite{LevMan2}; by $\Psi(G)$ we denote the family of all
local maximum stable sets of the graph $G$. For instance, $\{a\},\{a,e\}\in
\Psi(G)$, while $\{c\},\{b,f\}\notin\Psi(G)$, where $G$ is from Figure
\ref{fig111}. Notice also that in the same graph, the stable sets
$\{a,e\},\{b,f\}$ are contained in some maximum stable sets of $G$, while for
$\{a,c\},\{c,e\}$ this is not true.

\begin{theorem}
\cite{NemhTro}\label{th1} Every local maximum stable set of a graph is a
subset of a maximum stable set.
\end{theorem}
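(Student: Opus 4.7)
The plan is to start with an arbitrary maximum stable set $T$ of $G$ and surgically modify it so that it contains $S$ while staying maximum in cardinality. Concretely, I would form
\[
S' \;=\; S \,\cup\, \bigl(T \setminus N[S]\bigr)
\]
and show that $S'$ is a maximum stable set of $G$. This is a very standard ``exchange'' construction, and the local maximum stable set property of $S$ is exactly what makes the size accounting work.

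First I would verify that $S'$ is stable. The vertices of $S$ form a stable set by assumption, and $T \setminus N[S] \subseteq T$ is stable because $T$ is. For edges between the two pieces, any neighbor in $G$ of a vertex of $S$ lies in $N[S]$, so no vertex of $T \setminus N[S]$ is adjacent to any vertex of $S$. Hence $S'$ is stable, and $|S'| = |S| + |T \setminus N[S]|$.

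Next I would compare $|S'|$ with $|T|$. Observe that $T \cap N[S]$ is a stable set contained in $V(G[N[S]])$, so it is a stable set of the subgraph $G[N[S]]$. Since $S \in \Omega(G[N[S]])$ by the hypothesis that $S$ is a local maximum stable set, we get
\[
|S| \;\geq\; |T \cap N[S]|.
\]
Combining this with the obvious decomposition $|T| = |T \cap N[S]| + |T \setminus N[S]|$ yields
\[
|S'| \;=\; |S| + |T \setminus N[S]| \;\geq\; |T \cap N[S]| + |T \setminus N[S]| \;=\; |T| \;=\; \alpha(G).
\]
Because $T$ is a maximum stable set, the inequality must be an equality, so $S'$ is itself a maximum stable set of $G$ containing $S$.

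There is no real obstacle here; the only step that requires a moment's thought is noticing that $T \cap N[S]$ is a stable set \emph{inside} $G[N[S]]$, which is the hook that lets the local maximality of $S$ be invoked. Everything else is bookkeeping on the partition $V = N[S] \,\dot\cup\, (V \setminus N[S])$.
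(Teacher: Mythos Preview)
Your argument is correct. The paper does not actually supply a proof of this theorem; it merely quotes the result from Nemhauser and Trotter \cite{NemhTro}. What you have written is precisely the classical exchange argument that underlies their result: replace the portion of a maximum stable set $T$ lying inside $N[S]$ by $S$ itself, using the local maximality of $S$ to guarantee that no cardinality is lost. The verification that $S' = S \cup (T \setminus N[S])$ is stable and satisfies $|S'| \geq |T|$ is carried out cleanly, and the one subtle point---that $T \cap N[S]$ is a stable set of the induced subgraph $G[N[S]]$ and hence bounded in size by $|S|$---is identified explicitly. Nothing is missing.
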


\begin{definition}
\cite{BjZiegler}, \cite{KorLovSch} A \textit{greedoid} is a pair
$(V,\mathcal{F})$, where $\mathcal{F}\subseteq2^{V}$ is a non-empty set system
satisfying the following conditions:

\emph{Accessibility:} for every non-empty $X\in\mathcal{F}$ there is an $x\in
X$ such that $X-\{x\}\in\mathcal{F}$;

\emph{Exchange:} for $X,Y\in\mathcal{F},\left\vert X\right\vert =\left\vert
Y\right\vert +1$, there is an $x\in X-Y$ such that $Y\cup\{x\}\in\mathcal{F}$.
\end{definition}

In the sequel we use $\mathcal{F}$ instead of $(V,\mathcal{F})$, as the ground
set $V$ will be, usually, the vertex set of some graph.

\begin{figure}[h]
\setlength{\unitlength}{1.0cm} \begin{picture}(5,1.5)\thicklines
\multiput(3,0)(1,0){4}{\circle*{0.29}}
\multiput(3,1)(1,0){3}{\circle*{0.29}}
\put(3,0){\line(1,0){3}}
\put(3,1){\line(1,0){1}}
\put(4,1){\line(1,-1){1}}
\put(3,0){\line(0,1){1}}
\put(5,0){\line(0,1){1}}
\put(5.3,1){\makebox(0,0){$f$}}
\put(2.7,0){\makebox(0,0){$a$}}
\put(2.7,1){\makebox(0,0){$b$}}
\put(4.3,0.3){\makebox(0,0){$c$}}
\put(4.3,1){\makebox(0,0){$d$}}
\put(5.3,0.3){\makebox(0,0){$e$}}
\put(6.3,0){\makebox(0,0){$g$}}
\put(2,0.5){\makebox(0,0){$G$}}
\multiput(8.5,0)(1,0){4}{\circle*{0.29}}
\multiput(8.5,1)(1,0){4}{\circle*{0.29}}
\put(8.5,0){\line(1,0){3}}
\put(8.5,0){\line(1,1){1}}
\put(8.5,1){\line(1,0){3}}
\put(10.5,0){\line(0,1){1}}
\put(7.5,0.5){\makebox(0,0){$H$}}
\end{picture}\caption{$\Psi(G)$ is not a greedoid, $\Psi(H)$ is a greedoid.}%
\label{fig2922}%
\end{figure}
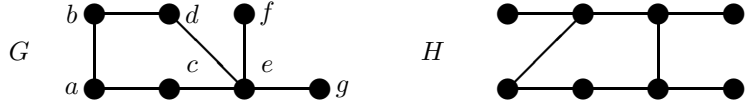

The graphs from Figure \ref{fig2922} are non-bipartite K\"{o}nig-Egerv\'{a}ry
graphs, and all their maximum matchings are uniquely restricted. Let us remark
that both graphs are also triangle-free, but only $\Psi(H)$ is a greedoid. It
is clear that $\{b,c\}\in$ $\Psi(G)$, while $\left\{  b\right\}  ,\left\{
c\right\}  \notin$ $\Psi(G)$. Notice also that $G[N[\{b,c\}]]$ is not a
K\"{o}nig-Egerv\'{a}ry graph, and, as one can see from the following theorem,
this is a good reason for $\Psi(G)$ not to be a greedoid.

\begin{theorem}
\label{th33}\cite{LevMan07} If $G$ is a triangle-free graph, then the
following assertions are equivalent:

\emph{(i)} $\Psi(G)$ is a greedoid;

\emph{(ii)} all maximum matchings of $G$ are uniquely restricted and the
closed neighborhood of every local maximum stable set of $G$ induces a
K\"{o}nig-Egerv\'{a}ry graph.
\end{theorem}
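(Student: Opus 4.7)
My plan is to establish the biconditional by relating each greedoid axiom to one of the matching conditions. Since Theorem~\ref{th4} pins down the structure of maximum matchings in a K\"onig--Egerv\'ary graph, and since triangle-freeness of $G$ controls the interaction between stable sets and matchings, I expect condition (b) to govern the accessibility axiom and condition (a) to govern the exchange axiom.

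For the $(\Leftarrow)$ direction, assume (a) and (b). To verify accessibility, take nonempty $S\in\Psi(G)$. Condition (b) makes $G[N[S]]$ a K\"onig--Egerv\'ary graph with $\alpha=|S|$, so $\mu(G[N[S]])=|N(S)|$, and Theorem~\ref{th4} places every maximum matching of $G[N[S]]$ inside the cut $(S,N(S))$. The uniquely-restricted hypothesis (a), applied to this matching, provides enough rigidity to isolate a ``boundary'' vertex $s\in S$ such that $S\setminus\{s\}\in\Psi(G)$. For the exchange axiom, given $S,T\in\Psi(G)$ with $|S|=|T|+1$, I will analyze the maximum matchings of $G[N[S\cup T]]$ via Theorem~\ref{th4}, using (b) for both $G[N[S]]$ and $G[N[T]]$; condition (a) prevents alternating cycles between these matchings, forcing their symmetric difference to decompose into alternating paths, an endpoint of which yields the required exchange vertex in $S\setminus T$.

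For the $(\Rightarrow)$ direction, assume $\Psi(G)$ is a greedoid. To prove (b), I argue the contrapositive: if $G[N[S]]$ is not K\"onig--Egerv\'ary for some $S\in\Psi(G)$, then $\mu(G[N[S]])<\tau(G[N[S]])$, and standard alternating-path arguments---combined with triangle-freeness, which prevents unwanted adjacencies among newly introduced vertices---produce, for every $s\in S$, a stable set larger than $S\setminus\{s\}$ contained in $N[S\setminus\{s\}]$. This forces $S\setminus\{s\}\notin\Psi(G)$ for every $s\in S$, violating accessibility at $S$. To prove (a), suppose some maximum matching $M$ is not uniquely restricted; then $G[V(M)]$ admits another maximum matching $M'$, and $M\triangle M'$ contains an $M$-alternating cycle $C$. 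The triangle-free assumption guarantees that alternate vertices of $C$ form a stable set of $G$, and I will use the two alternating two-colourings of $V(C)$ to exhibit $S,T\in\Psi(G)$ of differing cardinalities for which no exchange element exists, violating the exchange axiom.

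The main obstacle is the alternating-path/cycle bookkeeping in both directions, in particular ensuring that the candidate sets lie in $\Psi(G)$ rather than merely being stable in $G$. This requires combining triangle-freeness (to guarantee non-adjacency between newly introduced vertices) with Theorem~\ref{th4} (to restrict matching edges to the bipartition $(S,V\setminus S)$). The exchange-axiom verification in the $(\Leftarrow)$ direction is the step I expect to be the most delicate, since it must simultaneously handle the KE structures of two different closed neighborhoods together with their interaction under the uniqueness hypothesis (a).
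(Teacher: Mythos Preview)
The paper does not contain a proof of Theorem~\ref{th33}: the result is quoted from \cite{LevMan07} and is used as a black box in the proofs of Theorems~\ref{th7} and~\ref{th10}. Consequently there is nothing in the present paper to compare your proposal against.

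That said, your text is a strategy outline rather than a proof, and several of the announced steps are not clearly workable as stated. In the $(\Leftarrow)$ exchange argument you propose to analyze maximum matchings of $G[N[S\cup T]]$ and to invoke hypothesis~(b) there; but $S\cup T$ need not be stable, and even when it is, it need not lie in $\Psi(G)$, so (b) gives you no K\"onig--Egerv\'ary structure on $G[N[S\cup T]]$ to feed into Theorem~\ref{th4}. The actual argument in \cite{LevMan07} does not work with $N[S\cup T]$ at all; it stays inside $G[N[S]]$ and $G[N[T]]$ separately and uses the uniquely-restricted hypothesis to control how $N(S)$ meets $T$. In the $(\Rightarrow)$ direction for~(a), you say the two $2$-colourings of an alternating cycle $C$ will produce $S,T\in\Psi(G)$ of \emph{different} cardinalities violating exchange, but both colour classes of an even cycle have the same size, and you have not indicated why either class (or any augmentation of it) would belong to $\Psi(G)$ rather than merely be stable. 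These are the two places where the plan, as written, does not yet contain the idea that makes the step go through.
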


The cases of trees, bipartite graphs, unicycle graphs, whose family of local
maximum stable sets forms a greedoid, were analyzed in \cite{LevMan2},
\cite{LevMan45}, \cite{LevMan09b}, respectively.

In this paper we characterize very well-covered graphs of girth at least four,
whose families of local maximum stable sets are greedoids.

\section{Results}

Let us remark that the very well-covered graph $G_{1}$ in Figure \ref{fig811}
has a $C_{4}$ and one of the edges of this $C_{4}$ belongs to the unique
perfect matching of $G_{1}$.

\begin{lemma}
\label{lem1}No edge of some $C_{q}$, for $q=3$ or $q\geq5$, belongs to a
perfect matching in a very well-covered graph.
\end{lemma}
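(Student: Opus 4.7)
My plan is to argue by contradiction, leaning on Theorem~\ref{th11}(iii), which guarantees that in a very well-covered graph every perfect matching satisfies Property~\emph{P}. Suppose that $M$ is a perfect matching of such a graph $G$ and that some edge $e$ of a chordless cycle $C_{q}$ with $q=3$ or $q\geq 5$ happens to lie in $M$. Label the cycle as $v_{1}v_{2}\cdots v_{q}v_{1}$ so that $e=v_{1}v_{2}$. Property~\emph{P} applied to the edge $v_{1}v_{2}$ yields two pieces of information: (a) $N(v_{1})\cap N(v_{2})=\emptyset$, and (b) every vertex of $N(v_{1})-\{v_{2}\}$ is adjacent to every vertex of $N(v_{2})-\{v_{1}\}$.

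For $q=3$, the third vertex of the triangle lies in $N(v_{1})\cap N(v_{2})$, contradicting (a) at once. For $q\geq 5$, the vertices $v_{q}$ and $v_{3}$ are distinct, with $v_{q}\in N(v_{1})-\{v_{2}\}$ and $v_{3}\in N(v_{2})-\{v_{1}\}$, so clause (b) forces $v_{3}v_{q}\in E(G)$. However, since $C_{q}$ is a chordless cycle and $q\geq 5$, the vertices $v_{3}$ and $v_{q}$ are not consecutive on it, so $v_{3}v_{q}$ would be a chord, contradicting chordlessness.

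I do not anticipate a genuine obstacle here: the entire argument is a one-shot application of Property~\emph{P} to the matched edge, provided one remembers that $C_{q}$ in this paper denotes a chordless (induced) cycle. The only subtlety worth flagging is why the case $q=4$ is excluded from the statement: in a $C_{4}$ the analogous vertices $v_{3}$ and $v_{q}=v_{4}$ are already consecutive, so their adjacency is exactly what Property~\emph{P} demands rather than a contradiction, which is consistent with the fact that $G_{1}$ in Figure~\ref{fig811} is very well-covered and its unique perfect matching contains two edges of a $C_{4}$.
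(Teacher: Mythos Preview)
Your proposal is correct and follows essentially the same route as the paper: apply Property~\emph{P} (via Theorem~\ref{th11}) to the matched edge, use $N(x)\cap N(y)=\emptyset$ to rule out $q=3$, and use the second clause of Property~\emph{P} to force an adjacency between the two cycle-neighbours of $x$ and $y$, which for $q\geq 5$ is a forbidden chord. The paper's version is just terser and omits your explanatory aside about why $q=4$ escapes the argument.
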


\begin{proof}
If the graph $G$ is very well-covered, then by Theorem \ref{th11}, $G$ has a
perfect matching, say $M$, and each perfect matching satisfies Property
\emph{P}.

Let $xy\in M$. Then, Property \emph{P} implies that $N(x)\cap N(y)=\emptyset$,
i.e., $xy$ belongs to no $C_{3}$ in $G$. Further, if $v\in N(x)-\{y\}$ and
$u\in N(y)-\{x\}$, Property \emph{P} assures that $vu\in E(G)$, i.e., $xy$
belongs to no $C_{q}$, for $q\geq5$.
\end{proof}

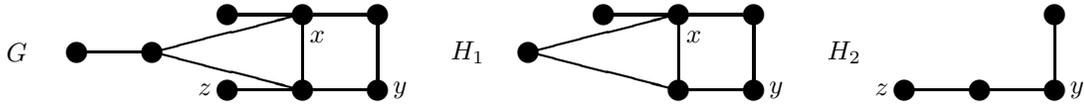
\begin{figure}[h]
\setlength{\unitlength}{1cm}\begin{picture}(5,1.2)\thicklines
\multiput(3,0)(1,0){3}{\circle*{0.29}}
\multiput(3,1)(1,0){3}{\circle*{0.29}}
\multiput(1,0.5)(1,0){2}{\circle*{0.29}}
\put(1,0.5){\line(1,0){1}}
\put(2,0.5){\line(4,1){2}}
\put(2,0.5){\line(4,-1){2}}
\put(3,0){\line(1,0){2}}
\put(3,1){\line(1,0){2}}
\put(4,0){\line(0,1){1}}
\put(5,0){\line(0,1){1}}
\put(4.2,0.7){\makebox(0,0){$x$}}
\put(5.3,0){\makebox(0,0){$y$}}
\put(2.7,0){\makebox(0,0){$z$}}
\put(0.2,0.5){\makebox(0,0){$G$}}
\multiput(9,0)(1,0){2}{\circle*{0.29}}
\multiput(8,1)(1,0){3}{\circle*{0.29}}
\put(7,0.5){\circle*{0.29}}
\put(7,0.5){\line(4,1){2}}
\put(7,0.5){\line(4,-1){2}}
\put(8,1){\line(1,0){2}}
\put(9,0){\line(1,0){1}}
\put(9,0){\line(0,1){1}}
\put(10,0){\line(0,1){1}}
\put(9.2,0.7){\makebox(0,0){$x$}}
\put(10.3,0){\makebox(0,0){$y$}}
\put(6.2,0.5){\makebox(0,0){$H_{1}$}}
\multiput(12,0)(1,0){3}{\circle*{0.29}}
\put(14,1){\circle*{0.29}}
\put(12,0){\line(1,0){2}}
\put(14,0){\line(0,1){1}}
\put(14.3,0){\makebox(0,0){$y$}}
\put(11.7,0){\makebox(0,0){$z$}}
\put(11.2,0.5){\makebox(0,0){$H_{2}$}}
\end{picture}\caption{Both $H_{1}=G[N[\{x,y\}]]$ and $H_{2}=G[N[\{y,z\}]]$ are
K\"{o}nig-Egerv\'{a}ry\ graphs.}%
\label{fig22111}%
\end{figure}

Let us mention that if $G$ is very well-covered, $S$ is a stable set such that
$G[N[S]]$ is a K\"{o}nig-Egerv\'{a}ry graph, then $S$ does not necessarily
belong to $\Psi(G)$; e.g., the set $S_{1}=\{x,y\}$ is stable in the graph $G$
depicted in Figure \ref{fig22111}, and $S_{1}\notin$ $\Psi(G)$, while
$H_{1}=G[N[S_{1}]]$ is a K\"{o}nig-Egerv\'{a}ry graph. Notice that
$S_{2}=\{y,z\}\in\Psi(G)$ and $H_{2}=G[N[S_{2}]]$ is a K\"{o}nig-Egerv\'{a}ry
graph. The following finding, firstly presented in \cite{LevMan07}, shows that
this phenomenon is true for very well-covered graphs in general. We repeat the
proof for the sake of self-containment.

\begin{theorem}
\label{th7}If $G$ is a very well-covered graph, then $G[N[S]]$ is a
K\"{o}nig-Egerv\'{a}ry graph, for every $S\in\Psi(G)$.
\end{theorem}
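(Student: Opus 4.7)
The plan is to upgrade the natural matching $M_{S}:=\{sf(s):s\in S\}$ into a \emph{perfect} matching of $G[N[S]]$; this will force $\mu(G[N[S]])+\alpha(G[N[S]])=|N[S]|$, which is exactly the K\"{o}nig-Egerv\'{a}ry property. By Theorem \ref{th11} I pick a perfect matching $M$ of $G$ satisfying Property \emph{P}, and by Theorem \ref{th1} I extend $S$ to some $S^{*}\in\Omega(G)$. Since $G$ is a K\"{o}nig-Egerv\'{a}ry graph by Theorem \ref{th5}, Theorem \ref{th4} forces $M\subseteq(S^{*},V(G)-S^{*})$, and hence $M$ induces a bijection $f:S^{*}\to V(G)-S^{*}$ with $sf(s)\in M$ for every $s\in S^{*}$. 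Because $f(s)\in N(s)\subseteq N(S)$ for every $s\in S$, $M_{S}$ is a matching inside $G[N[S]]$ of size $|S|$ saturating $S\cup f(S)$, and the whole proof reduces to showing $f(S)=N(S)$.

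Assume for contradiction that $v\in N(S)\setminus f(S)$; then $u:=f^{-1}(v)\in S^{*}\setminus S$, since otherwise $v=f(u)\in f(S)$. Put $T:=N(v)\cap S$ and $V':=\{v\}\cup\{f(s):s\in T\}$, and note $v\neq f(s)$ for each $s\in T$ because $f^{-1}(v)=u\notin T$. Applying Property \emph{P} to each $sf(s)\in M$ with $s\in T$ yields two basic facts: (i) since $v\in N(s)-\{f(s)\}$, the vertex $v$ is adjacent to every vertex of $N(f(s))-\{s\}$, which implies $N(f(s))\cap S\subseteq T$; and (ii) by Lemma \ref{lem1} we have $N(s)\cap N(f(s))=\emptyset$, and $v\in N(s)$ then gives $vf(s)\notin E(G)$.

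The crux of the argument is to show that the vertices $\{f(s):s\in T\}$ are pairwise non-adjacent. Suppose, toward contradiction, that $f(s_{1})f(s_{2})\in E(G)$ for some distinct $s_{1},s_{2}\in T$; then $f(s_{2})\in N(f(s_{1}))-\{s_{1}\}$ and $v\in N(s_{1})-\{f(s_{1})\}$, so Property \emph{P} on $s_{1}f(s_{1})$ forces $vf(s_{2})\in E(G)$, contradicting (ii). Hence $V'$ is a stable set of size $|T|+1$. Combined with (i), the set $(S\setminus T)\cup V'$ is contained in $N[S]$ and is stable (no edges inside $S\setminus T$, no edges inside $V'$, and any edge between them would put some $s'\in S\setminus T$ into $N(v)\cap S=T$ or into $N(f(s))\cap S\subseteq T$), with cardinality $|S|+1$, contradicting $S\in\Omega(G[N[S]])$. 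Therefore $f(S)=N(S)$, $M_{S}$ is a perfect matching of $G[N[S]]$, and $\alpha(G[N[S]])+\mu(G[N[S]])=|S|+|S|=|N[S]|$. The main obstacle is precisely the pairwise non-adjacency inside $\{f(s):s\in T\}$, where Property \emph{P} is used in a way that genuinely goes beyond the triangle-freeness of matching edges guaranteed by Lemma \ref{lem1}.
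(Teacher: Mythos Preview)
Your argument is correct and matches the paper's approach: both take a vertex $v\in N(S)\setminus M(S)$ and use Property \emph{P} (together with Lemma \ref{lem1}) to exhibit a stable set of size $|S|+1$ in $G[N[S]]$, contradicting $S\in\Psi(G)$. The paper arrives at the very same witnessing set through an iterative reindexing of $S$ (showing at stage $j$ that $v$ is adjacent to $v_1,\dots,v_j$ and $\{v,w_1,\dots,w_j\}$ is stable), whereas you isolate $T=N(v)\cap S$ at once and verify directly that $(S\setminus T)\cup\{v\}\cup f(T)$ is stable---a cleaner execution of the same idea.
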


\begin{proof}
By Theorem \ref{th5}, $G$ is a K\"{o}nig-Egerv\'{a}ry graph. According to
Theorem \ref{th11}, $G$ has a perfect matching, say $M$, and each perfect
matching satisfies Property \emph{P}.

Suppose by way of contradiction that there is $S=\{v_{i}:1\leq i\leq
k\}\in\Psi(G)$, such that $G[N[S]]$ is not a K\"{o}nig-Egerv\'{a}ry graph.

Since $G$ is well-covered, there exists some $W\in\Omega(G)$, with $S\subseteq
W$. By Theorem \ref{th4}, $M\subseteq(W,V(G)-W)$, and because $M$ is a perfect
matching and $S\subseteq W$, we infer that $S$ is matched by $M$ into $N(S)$,
and this implies $\left\vert S\right\vert \leq\left\vert N(S)\right\vert $.
The assumption that $G[N[S]]$ is not a K\"{o}nig-Egerv\'{a}ry graph leads to
$\left\vert N(S)\right\vert >\left\vert S\right\vert $. It means that there
exists a vertex $x\in N(S)-M(S)$, where $M(S)=\{w_{i}:v_{i}w_{i}\in M,1\leq
i\leq k\}$.

In the following, we will prove that the set $\{x\}\cup M(S)$ is stable.

Firstly, $x$ must be adjacent to some vertex, say $v_{1}$, from $S$, otherwise
$S\cup\left\{  x\right\}  $ is a stable set larger than $S$ in $G[N[S]]$, in
contradiction with $S\in\Psi(G)$. By Lemma \ref{lem1}, $x$ is not adjacent to
$w_{1}$, since $v_{1}w_{1}\in M$. Thus, $\left\{  x,w_{1}\right\}  $ is a
stable set.

One of $x,w_{1}$ must be adjacent to one vertex, say $v_{2}$, from $S$,
because, otherwise, the set $\left\{  x,w_{1}\right\}  \cup\{v_{i}:2\leq i\leq
k\}$ would be stable in $G[N[S]]$, larger than $S$. If $w_{1}v_{2}\in E(G)$,
then Property \emph{P}, applied to the edge $v_{1}w_{1}\in M$, ensures that
$xv_{2}\in E(G)$.

In other words, $x$ must be adjacent to $v_{1}$. Moreover, the set
$\{x,w_{1},w_{2}\}$ is stable, because $xw_{2}\notin E(G)$ according to Lemma
\ref{lem1}, while for $w_{1}w_{2}\in E(G)$ we get, by Property \emph{P}, that
$xw_{1}\in E(G)$, in contradiction with the fact that $\{x,w_{1}\}$ is a
stable set.

Assume that for some $j<k$, the set
\[
A_{j}=\{x\}\cup\{w_{i}:1\leq i\leq j\}
\]
is stable, and $x$ is adjacent to each $v_{i},1\leq i\leq j$. Then, there is
an edge joining a vertex, say $a$, belonging to $A$, and a vertex, say
$v_{j+1}$, from the set $\{v_{i}:j+1\leq i\leq k\}$. Otherwise,
\[
A_{j}\cup\{v_{i}:j+1\leq i\leq k\}
\]
is a stable set in $G[N[S]]$, larger than $S$. If $a=w_{t}$, then by Property
\emph{P}, when the edge\emph{\ }$v_{t}w_{t}$ is concerned, the vertex $x$ must
be adjacent to $v_{j+1}$. Thus, no matter where $a$ is located, the vertex $x$
is adjacent to the vertex $v_{j+1}$ (see Figure \ref{fig 33}\textit{(a)}).
\begin{figure}[h]
\setlength{\unitlength}{1cm}\begin{picture}(5,2.2)\thicklines
\multiput(2,0.5)(1,0){3}{\circle*{0.29}}
\multiput(3,1.5)(1,0){2}{\circle*{0.29}}
\multiput(3,0.5)(1,0){2}{\line(0,1){1}}
\multiput(6,0.5)(1,0){3}{\circle*{0.29}}
\multiput(6,1.5)(1,0){3}{\circle*{0.29}}
\multiput(6,0.5)(1,0){3}{\line(0,1){1}}
\multiput(4,0.5)(0.2,0){10}{\circle*{0.11}}
\multiput(6,0.5)(0.2,0){5}{\circle*{0.11}}
\put(2,0.5){\line(1,1){1}}
\put(2,0.5){\line(2,1){2}}
\put(2,0.5){\line(4,1){4}}
\put(2,0.5){\line(5,1){5}}
\put(6,0.5){\line(2,1){2}}
\put(2,0){\makebox(0,0){$x$}}
\put(3,0){\makebox(0,0){$w_1$}}
\put(3,1.9){\makebox(0,0){$v_1$}}
\put(4,0){\makebox(0,0){$w_2$}}
\put(4,1.9){\makebox(0,0){$v_2$}}
\put(6,0){\makebox(0,0){$w_t$}}
\put(6,1.9){\makebox(0,0){$v_t$}}
\put(7,0){\makebox(0,0){$w_j$}}
\put(7,1.9){\makebox(0,0){$v_j$}}
\put(8,0){\makebox(0,0){$w_{j+1}$}}
\put(8,1.9){\makebox(0,0){$v_{j+1}$}}
\put(1.2,1){\makebox(0,0){$(a)$}}
\multiput(10.7,0.5)(1,0){3}{\circle*{0.29}}
\multiput(11.7,1.5)(1,0){2}{\circle*{0.29}}
\multiput(11.7,0.5)(1,0){2}{\line(0,1){1}}
\put(10.7,0.5){\line(1,1){1}}
\put(10.7,0.5){\line(2,1){2}}
\put(11.7,0.5){\line(1,0){1}}
\put(9.8,1){\makebox(0,0){$(b)$}}
\put(10.7,0){\makebox(0,0){$x$}}
\put(11.7,0){\makebox(0,0){$w_t$}}
\put(11.7,1.9){\makebox(0,0){$v_t$}}
\put(12.7,0){\makebox(0,0){$w_{j+1}$}}
\put(12.7,1.9){\makebox(0,0){$v_{j+1}$}}
\end{picture}\caption{\textit{(a)} The vertex $x$ is adjacent to all vertices
from $\{v_{i}:1\leq i\leq j\}$. \textit{(b) }The vertices\textit{\ }
$x,v_{j+1},w_{j+1},w_{t},v_{t}$ span a five vertex cycle.}%
\label{fig 33}%
\end{figure}
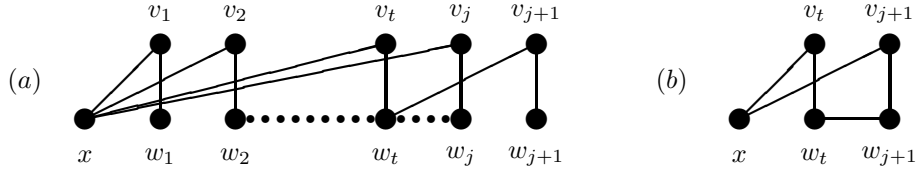

Since $xv_{j+1}\in E(G)$ and $v_{j+1}w_{j+1}\in M$, Lemma \ref{lem1} implies
that the vertices $x$ and $w_{j+1}$ are not adjacent. Moreover, no vertex from
the set $\{w_{i}:1\leq i\leq j\}$ is adjacent to $w_{j+1}$. Otherwise, if some
$w_{t}$ is adjacent to $w_{j+1}$, then $\{x,v_{j+1},w_{j+1},w_{t},v_{t}\}$
spans a five vertex cycle in $G[N[S]]$ (see Figure \ref{fig 33}\textit{(b)}).
In accordance with Property \emph{P}, when the edge\emph{\ }$v_{t}w_{t}$ is
concerned, the vertex $x$ must be adjacent to $w_{j+1}$. Hence, $\{x,v_{j+1}%
,w_{j+1}\}$ spans a triangle, which is impossible, by Lemma \ref{lem1}.

Therefore, the set $A_{j+1}$ is stable. In this way one can eventually reach
the set $\{x\}\cup M(S)$, which must be stable in $G[N[S]]$ like all its
predecessors. Now the inequality
\[
\left\vert \{x\}\cup M(S)\right\vert >\left\vert S\right\vert
\]
stays in contradiction with the following facts:
\[
\{x\}\cup M(S)\subseteq N[S]\text{\ \ and }S\in\Psi(G).
\]

Consequently, $G[N[S]]$ is a K\"{o}nig-Egerv\'{a}ry graph.
\end{proof}

\begin{theorem}
\label{th10}Let $G$ be a very well-covered graph of girth at least $4$. Then
the following assertions are equivalent:

\emph{(i)} $\Psi(G)$ is a greedoid;

\emph{(ii)} $G$ has a unique maximum matching.
\end{theorem}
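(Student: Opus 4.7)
The plan is to reduce the statement to \textbf{Theorem \ref{th33}}, exploiting that girth $\geq 4$ forces $G$ to be triangle-free and that Theorem \ref{th7} already handles one of the two conditions in Theorem \ref{th33}. Indeed, since $G$ is very well-covered, Theorem \ref{th7} tells us that $G[N[S]]$ is a K\"{o}nig-Egerv\'{a}ry graph for every $S\in\Psi(G)$. Consequently, when we invoke Theorem \ref{th33}, the equivalence collapses to: $\Psi(G)$ is a greedoid if and only if every maximum matching of $G$ is uniquely restricted. Thus the whole theorem is equivalent to showing that, for very well-covered $G$ of girth $\geq 4$,
\[
\text{every maximum matching is uniquely restricted} \iff G \text{ has a unique maximum matching.}
\]

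The direction (ii)$\Rightarrow$(i) is essentially by definition: by Theorem \ref{th11}, $G$ admits a perfect matching $M$, so a unique maximum matching means $M$ is the only perfect matching of $G = G[V(M)]$, hence $M$ is uniquely restricted (and it is the only maximum matching).

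For the harder direction, I would argue by contrapositive. Assume $G$ has two distinct perfect matchings $M_1\ne M_2$ (both are maximum), and analyze the symmetric difference $M_1\triangle M_2$. As a subgraph of $G$, it has maximum degree $2$ and every vertex has even degree there, so its connected components are even alternating cycles. Pick one such cycle $C$ of length $2\ell\geq 4$. All edges of $C$ coming from $M_1$ lie on $C$, a cycle of length $2\ell$ in $G$. Here I would use \textbf{Lemma \ref{lem1}}: no edge of the perfect matching $M_1$ can lie on a $C_q$ with $q=3$ or $q\geq 5$. Since $2\ell$ is even and at least $4$, this forces $2\ell=4$, i.e.\ $C=C_4$.

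Now $C$ has four vertices $v_1,v_2,v_3,v_4$ with $v_1v_2,v_3v_4\in M_1$ and $v_2v_3,v_4v_1\in M_2$. The girth hypothesis ($\geq 4$) rules out the chords $v_1v_3$ and $v_2v_4$ (each would complete a triangle with two consecutive $C_4$-edges), so $G[\{v_1,v_2,v_3,v_4\}]$ is exactly $C_4$, which admits two perfect matchings. Therefore $M_1$ is not uniquely restricted on $\{v_1,v_2,v_3,v_4\}$, giving a maximum matching of $G$ that is not uniquely restricted, and completing the contrapositive.

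The one step I would flag as the main obstacle is the use of Lemma \ref{lem1} to pin down the length of the alternating cycle: one must recognize that in a very well-covered graph the structural restriction from Property \emph{P} kills all even alternating cycles except $C_4$'s, and then the girth assumption is precisely what is needed to argue the induced subgraph on that $C_4$ has no chords. Everything else is a direct assembly of Theorems \ref{th33} and \ref{th7}.
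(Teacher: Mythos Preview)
Your reduction via Theorems \ref{th33} and \ref{th7} is exactly the paper's strategy, and your handling of (ii)$\Rightarrow$(i) matches the paper. For (i)$\Rightarrow$(ii), though, you work far harder than necessary and introduce a small gap along the way. Once Theorem \ref{th33} tells you that every maximum matching of $G$ is uniquely restricted, recall (Theorem \ref{th11}) that every maximum matching $M$ is perfect, so $G[V(M)]=G$; hence ``$M$ is uniquely restricted'' \emph{literally means} that $M$ is the unique perfect matching of $G$, and uniqueness of the maximum matching follows in one line. That is all the paper does.

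Your contrapositive detour through $M_{1}\triangle M_{2}$ and Lemma \ref{lem1} is therefore unnecessary, and as written it is not quite justified: in this paper $C_{q}$ denotes the \emph{chordless} cycle on $q$ vertices, so Lemma \ref{lem1} only forbids perfect-matching edges from lying on \emph{induced} cycles of length $3$ or at least $5$. The alternating cycle you extract from $M_{1}\triangle M_{2}$ need not be induced, so you cannot invoke Lemma \ref{lem1} to force its length to be $4$. (Indeed, Property \emph{P} only guarantees that any longer cycle through a matching edge has a chord, not that no such cycle exists.) The conclusion you want---$M_{1}$ is not uniquely restricted---is true for the trivial reason that $M_{2}$ is already a second perfect matching of $G=G[V(M_{1})]$, so you should simply drop the symmetric-difference argument.
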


\begin{proof}
Firstly, Theorem \ref{th11} implies that each maximum matching of $G$ is perfect.

\emph{(i) }$\Longrightarrow$ \emph{(ii) }Since the girth of $G$ is greater or
equal to $4$, the graph $G$\ is triangle-free. Hence, according to Theorem
\ref{th33}, a perfect matching of $G$ is unique.

\emph{(ii) }$\Longrightarrow$ \emph{(i)} In fact, $G$ has a unique perfect
matching. Consequently, every maximum matching of $G$\ is uniquely restricted.
Combining the fact that $G$\ is triangle-free with Theorems \ref{th7} and
\ref{th33}, we conclude that $\Psi(G)$ is a greedoid.
\end{proof}

The structure of very well-covered graphs of girth at least $5$ is more specific.

\begin{theorem}
\cite{DeanZio1994}, \cite{LevMan07a} Let $G$ be a graph of girth at least $5$.
Then $G$ is very well-covered if and only if $G=H\circ K_{1}$, for some graph
$H$ of girth $\geq5$.
\end{theorem}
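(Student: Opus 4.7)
The plan is to handle the two directions separately, with the core of the argument resting on how Property \emph{P} from Theorem \ref{th11} interacts with the girth hypothesis.

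For the sufficiency direction, I would assume $G = H \circ K_{1}$ for some $H$ of girth $\geq 5$. The pendant edges form a perfect matching $M$, and for each such edge $xy$ with $x$ the pendant, $N(x) = \{y\}$, so $N(x) \cap N(y) = \emptyset$ and $N(x) - \{y\} = \emptyset$ makes the adjacency clause of Property \emph{P} vacuous. Theorem \ref{th11} then yields that $G$ is very well-covered. Moreover, attaching pendants creates no new cycles, so $G$ has the same girth as $H$, namely $\geq 5$.

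For the necessity direction, Theorem \ref{th11} supplies a perfect matching $M$ of $G$ satisfying Property \emph{P}. The crucial step is to show that for each edge $xy \in M$, at least one of $x, y$ has degree $1$ in $G$. Suppose to the contrary that both $N(x) - \{y\}$ and $N(y) - \{x\}$ were nonempty; pick $v$ in the first and $u$ in the second. Property \emph{P} gives $N(x) \cap N(y) = \emptyset$, hence $u \neq v$, and its adjacency clause forces $uv \in E(G)$. Then $x, v, u, y$ spans a $4$-cycle, contradicting girth $\geq 5$. Hence every $M$-edge has at least one pendant endpoint.

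To conclude, for each $M$-edge I would select a pendant endpoint (arbitrarily when both endpoints are pendants, corresponding to isolated $K_{2}$ components), letting $B$ denote the set of chosen pendants and $A = V(G) - B$. Set $H = G[A]$. Since vertices in $B$ have degree one in $G$ and are each adjacent only to their matched partner in $A$, they are pairwise non-adjacent, and $M$ realizes a bijection between $A$ and $B$. Thus $G = H \circ K_{1}$, and as $H$ is an induced subgraph of $G$, its girth is at least that of $G$. The principal difficulty is the $4$-cycle argument: it requires invoking both clauses of Property \emph{P} simultaneously---disjointness of the neighborhoods to force $u \neq v$, and the adjacency clause to close the cycle---and it is precisely here that the hypothesis girth $\geq 5$ (rather than merely $\geq 4$, where the $C_{4}$ of Figure \ref{fig811} already shows the conclusion fails) is used.
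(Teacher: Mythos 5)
Your proof is correct. Note that the paper itself does not prove this statement at all---it is quoted from \cite{DeanZio1994} and \cite{LevMan07a} without proof---so there is no in-paper argument to compare against; your write-up is an independent and self-contained derivation from Theorem \ref{th11} alone. Both directions check out: for sufficiency, a pendant edge $xy$ with $x$ pendant has $N(x)=\{y\}$, so both clauses of Property \emph{P} hold (the second vacuously), and Favaron's theorem applies since $H\circ K_{1}$ has no isolated vertices. For necessity, your key step is sound: if $v\in N(x)-\{y\}$ and $u\in N(y)-\{x\}$ both exist for some $xy\in M$, then $N(x)\cap N(y)=\emptyset$ gives $u\neq v$, the adjacency clause gives $uv\in E(G)$, and $x,v,u,y$ are four distinct vertices forming a $4$-cycle, contradicting girth $\geq5$; hence each matching edge has a degree-one endpoint, the chosen pendants form a stable set meeting each vertex of $A$ exactly once via $M$, and $G=G[A]\circ K_{1}$ with $G[A]$ inheriting girth $\geq5$ as an induced subgraph. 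The only cosmetic remark is that the handling of $K_{2}$ components (arbitrary choice of pendant) deserves the one sentence you gave it, and no more.
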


Consequently, a very well-covered graph of girth $\geq5$ has a unique perfect
matching. Therefore, by Theorem \ref{th10}, we get the following.

\begin{corollary}
\cite{LevMan08}\label{cor1} Each very well-covered graph of girth at least $5$
generates a local maximum stable set greedoid.
\end{corollary}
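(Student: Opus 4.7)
The plan is to combine the structural theorem stated immediately before the corollary with the characterization in Theorem \ref{th10}. Let $G$ be very well-covered of girth $\geq 5$. By that structural theorem, $G = H \circ K_{1}$ for some graph $H$ of girth $\geq 5$, i.e., $G$ is obtained from $H$ by attaching a pendant vertex to each vertex of $H$.

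First I would verify that $G$ has a unique perfect matching. The set $M$ of pendant edges is certainly a perfect matching of $G$. Conversely, if $M'$ is any perfect matching, then each pendant vertex $v \in V(G) - V(H)$ has a single neighbor in $G$, so the edge of $M'$ incident to $v$ must be its unique pendant edge; hence $M' = M$.

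Next, since $G$ is very well-covered, Theorem \ref{th11} tells us that every maximum matching of $G$ is perfect, so the uniqueness above upgrades to: $G$ has a unique maximum matching. Because the girth of $G$ is at least $5 \geq 4$, Theorem \ref{th10} applies and yields that $\Psi(G)$ is a greedoid, proving the corollary.

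There is no real obstacle here: the whole point is that the structural classification collapses the hypothesis of Theorem \ref{th10} to something trivially satisfied. The only mildly substantive step is observing the uniqueness of the perfect matching, and this follows immediately from the pendant structure imposed by the $\circ K_{1}$ construction.
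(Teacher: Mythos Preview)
Your proof is correct and follows exactly the route the paper takes: use the structural theorem $G=H\circ K_{1}$ to conclude that $G$ has a unique perfect matching, then invoke Theorem~\ref{th10}. The paper states this in one line (``Consequently, a very well-covered graph of girth $\geq 5$ has a unique perfect matching. Therefore, by Theorem~\ref{th10}, we get the following.''), and you have simply spelled out the pendant-edge uniqueness argument and the passage from unique perfect matching to unique maximum matching explicitly.
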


It is known that the recognition of well-covered graphs is a co-\textbf{NP}%
-complete problem \cite{ChvatalSlater1993}, \cite{SanStew1992}. Nevertheless,
very well-covered graphs can be recognized in polynomial time. Actually, it
goes directly from Favaron's characterization. Namely, to recognize a graph as
being very well-covered, we just need to show that it has a perfect matching
which satisfies property \emph{P}. To find a maximum matching one needs
$O(\left\vert V\right\vert ^{\frac{1}{2}}\bullet\left\vert E\right\vert
)$\emph{ }time \cite{MiVa80}. To check property \emph{P} one has to handle
$O\left(  \left\vert V\right\vert ^{3}\right)  $ pairs of vertices in the
worst case. All in all, it gives us an $O\left(  \left\vert V\right\vert
^{3}\right)  $ algorithm.

If our goal is to recognize very well-covered graphs with unique perfect
matchings, then we may do\ better. The reason for this is that one can test
whether the graph has a unique perfect matching, and find it if it exists, in
$O\left(  \left\vert E\right\vert \bullet log^{4}\left\vert V\right\vert
\right)  $ time \cite{GabKapTar}. Finally, Theorem \ref{th10} and Corollary
\ref{cor1} justify that one can decide in $O\left(  \left\vert E\right\vert
\bullet log^{4}\left\vert V\right\vert \right)  $ time whether $\Psi(G)$ is a
greedoid, for a given very well-covered graph $G$ of girth $\geq4$.

\section{Conclusions}

In this paper we have proved that $\Psi(G)$ is a greedoid for those very
well-covered graphs $G$ of girth $\geq4$ that have a unique perfect matching.

\begin{problem}
Characterize very well-covered graphs of girth three producing local maximum
stable set greedoids.
\end{problem}

\end{document}